\newtheorem{theorem}{Theorem}
\newtheorem{lemma}{Lemma}
\newcommand{\abs}[1]{\left\vert#1\right\vert}
\begin{document}

\title{Decentralized Two-Hop Opportunistic Relaying With Limited Channel State Information}

\author{\authorblockN{Shengshan Cui and Alexander~M.~Haimovich}
\authorblockA{Department of Electrical and Computer Engineering\\
New Jersey Institute of Technology, Newark, NJ 07102, USA\\
Email: \{shengshan.cui, alexander.m.haimovich\}@njit.edu} \and
\authorblockN{Oren Somekh and H.~Vincent Poor}
\authorblockA{Department of Electrical Engineering\\
Princeton University, Princeton, NJ 08544, USA\\
Email: \{orens, poor\}@princeton.edu}}

\maketitle

\begin{abstract} \boldmath
A network consisting of $n$ source-destination pairs and $m$ relays
with no direct link between source and destination nodes, is
considered. Focusing on the large system limit (large $n$), the
throughput scaling laws of two-hop relaying protocols are studied
for Rayleigh fading channels. It is shown that, under the practical
constraints of single-user encoding-decoding scheme, and partial
channel state information (CSI) at the transmitters (via
integer-value feedback from the receivers), the maximal throughput
scales as $\log n$ even if joint scheduling among relays is allowed.
Furthermore, a novel opportunistic relaying scheme with receiver
CSI, partial transmitter CSI, and decentralized relay scheduling, is
shown to achieve the optimal throughput scaling law of $\log n$.
\end{abstract}

\section{Introduction}
\label{sec:intro}

The ever growing demand for ubiquitous access to high data rate
services necessitates new network architectures, such as \emph{ad
hoc} and relay networks. Over the last decade, a large body of work
analyzing the fundamental system throughput limits of such networks
has been reported. In particular, numerous communication schemes
approaching these limits under various settings have been proposed,
e.g. \cite{GK:00,GHH:06,OLT:07,DH:06,MB:07}.

Notably, Gowaikar \emph{et al.}\ \cite{GHH:06} proposed a new
wireless ad hoc network model, whereby the strengths of the
connections between nodes are drawn independently from a common
distribution, and analyzed the maximum system throughput under
different fading distributions. Such a model is appropriate for
environments with rich scattering but small \emph{physical} size, so
that the connections are governed by random fading instead of
deterministic path loss attenuations (i.e., dense network). When the
random channel strengths follow a Rayleigh fading model, the system
throughput scales as $\Theta(\log n)$.\footnote{Throughout the paper
$\log(\cdot)$ indicates the natural logarithm. For two functions
$f(n)$ and $g(n)$, $f(n)=O(g(n))$ means that $\lim_{n\to \infty}
\abs{f(n)/g(n)} <\infty$, and the notation $f(n)=o(g(n))$ means that
$\lim_{n\to \infty} \abs{f(n)/g(n)} = 0$. We write
$f(n)=\Theta(g(n))$ to denote $f(n)=O(g(n))$ and $g(n)=O(f(n))$.}
This result is achievable through a multihop scheme that requires
central coordination of the routing between nodes.

In this work, we focus on dense networks and two-hop relaying
schemes, in which $n$ source nodes communicate with $n$ destination
nodes via $m$ relay nodes (no direct connection is allowed between
sources and destinations). Dana and Hassibi have proposed an
amplify-and-forward protocol in \cite{DH:06} and shown that a
throughput of $\Theta(n)$  bits/s/Hz is achievable with $m\geq
n^{2}$ relay nodes. It is assumed that each relay node has full
local channel state information (CSI) (backward channels from all
source nodes, and forward channels to all destination nodes), so
that the relays can perform \emph{distributed beamforming}. In
\cite{MB:07}, Morgenshtern and B{\"{o}}lcskei showed a similar
distributed beamforming scheme which demonstrates tradeoffs between
the level of available CSI and the system throughput. In particular,
using a scheme with relays partitioned into groups, where relays
assigned in the same group require knowledge of backward and forward
channels of only one source-destination (S--D) pair, the number of
relays required to support a $\Theta \left( n\right) $ throughput is
$m\geq n^{3}$. Hence, restricting the CSI in such a way increases
the number of required relays from $n^{2}$ to $n^{3}$ to support
throughput of $\Theta(n)$.

While the two-hop schemes reported in \cite{DH:06} and \cite{MB:07}
do not require central coordination among relays (central
coordination is required for the multihop schemes of
\cite{GK:00,GHH:06,OLT:07}), some level of transmitter CSI (channel
amplitude and/or phase) is still required. In a large system,
obtaining this level of CSI, especially at the transmit side, may
not be feasible. 
This consideration leads to the following questions: How does the
throughput scaling change under a practical, partial CSI assumption?
Can the throughput scaling bounds be approached with any specific
schemes?

In the sequel, we give partial answers to the questions above by
restricting ourselves to decode-and-forward protocols. In
Section~\ref{sec:upperBound}, an upper bound on the throughput is
calculated in the large system regime. It is shown that with only
partial CSI at the transmitters, the throughput scaling of any
two-hop scheme is upper-bounded by $\Theta(\log n)$. In
Section~\ref{sec:scheme}, an opportunistic relaying scheme that can
achieve the optimal scaling is proposed. This scheme operates in a
completely decentralized fashion and requires only receiver CSI
knowledge and a low-rate feedback to the transmitters. Finally,
Section~\ref{sec:conclusion} concludes the paper.

\section{Throughput Scaling Upper Bound for Two-Hop Protocols}
\label{sec:upperBound}

In this section, we establish an upper bound on the throughput
scaling of two-hop protocols. We adopt the random connection model
of \cite{GHH:06} and specifically assume a Rayleigh fading model,
i.e., the connections between any source-to-relay (S--R) pair and
between any relay-to-destination (R--D) pair follow independent and
identically distributed (i.i.d.) flat Rayleigh fading. We assume
that in each hop the receivers have perfect CSI knowledge of the
channel realizations, but the transmitters do not have full CSI
knowledge. We assume a single-user encoding-decoding scheme, i.e,
mutual interfering signals are treated as additive noise.
Furthermore, we assume the transmission rate is fixed, i.e., the
transmission rate of each scheduled link is not adaptive to
instantaneous signal-to-interference-plus-noise ratio (SINR).
Accordingly, a transmission is deemed successful only if the SINR is
not below a prescribed threshold.

We have the following throughput upper bound.
\begin{theorem}
\label{thm:upper_bound_p1} Under the aforementioned assumptions, the
throughput of each hop scales {\em at most} as $\log n$.
\end{theorem}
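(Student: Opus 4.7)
The plan is to grant the scheduler full CSI (a generous upper bound) and then to bound, via a union bound over schedules, the largest number $k^{\star}$ of source-relay pairs whose SINRs can simultaneously clear a common threshold. Fix the non-adaptive per-link rate $R$ and let $\beta=e^{R}-1$ be the associated SINR threshold; because interference is treated as noise, the hop throughput equals $R\cdot k^{\star}$. Removing a failing pair from any schedule only improves the SINRs of the remaining pairs, so $k^{\star}$ equals the largest $k$ for which \emph{some} $k$-pair schedule has all $k$ of its pairs succeeding when run in isolation. It therefore suffices to show $k^{\star}R=O(\log n)$.

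For a fixed schedule $\{(s_{i},r_{i})\}_{i=1}^{k}$ under Rayleigh fading, I would compute the joint success probability by conditioning on the $k(k-1)$ interferer channel powers $|h_{s_{j},r_{i}}|^{2}$, $j\neq i$, which are mutually independent unit-mean exponentials and independent of the $k$ signal channels $|h_{s_{i},r_{i}}|^{2}$. The conditional joint event factorizes into $k$ exponential tail events, and applying the Laplace transform $\mathbb{E}[e^{-\beta X}]=1/(1+\beta)$ for $X\sim\mathrm{Exp}(1)$ yields
\begin{equation*}
\Pr[\text{all $k$ pairs succeed}]=\frac{e^{-k\beta/P}}{(1+\beta)^{k(k-1)}}.
\end{equation*}
Since there are at most $(nm)^{k}$ candidate $k$-schedules, the expected number of successful ones is at most $(nm)^{k}/(1+\beta)^{k(k-1)}$, which drops below one as soon as $(k-1)\log(1+\beta)>\log(nm)$; by Markov's inequality, no such schedule then exists with high probability. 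Assuming $\log m=O(\log n)$, this forces $(k^{\star}-1)R=O(\log n)$. The residual single-link case is handled separately: the maximum of $nm$ i.i.d.\ exponential channel powers scales as $\log(nm)$, capping any achievable rate at $R=\log(1+P\log(nm))=O(\log\log n)$, so $k^{\star}R=O(\log n)$ in every case.

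I expect the delicate point to be the joint-success calculation itself: the quadratic exponent $k(k-1)$ in the denominator is exactly what traps throughput at $\log n$, and any sloppy accounting of which $(s,r)$ channels are shared across which receivers would either inflate that exponent (losing tightness) or collapse it to a linear one (leaving only a trivial bound). Once the closed-form success probability is in hand, the Markov/union-bound step and the final throughput cap follow by elementary algebra.
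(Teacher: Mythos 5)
Your proposal is correct and follows essentially the same route as the paper: you introduce a full-CSI genie scheduler and then run a first-moment/union-bound argument over all candidate schedules, with the exact Rayleigh joint-success probability $e^{-k\beta/\rho}/(1+\beta)^{k(k-1)}$ whose quadratic exponent forces $k^{\star}=O(\log n)$ --- this is precisely the probabilistic argument the paper outlines and defers to \cite[Th.~2]{CHSP:07}, which for threshold $\beta=1$ yields the stated $\frac{\log n}{\log 2}+2$ cap. The only cosmetic differences are that you count schedules by $(nm)^{k}$ rather than $\binom{n}{m}m!$ and additionally treat a variable per-link rate $R$, neither of which changes the argument.
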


\begin{proof}(Outline)
We begin with the first hop. Since the transmission rate of each
link is a fixed number, finding the throughput upper bound is
equivalent to finding the maximum number of concurrent successful
transmissions. To this end, we consider following genie scheme. For
any channel realization of the network, the genie scheme is assumed
to have the full CSI of the network, and thus is able to schedule in
every time-slot the largest set of concurrent successful S--R pairs.
Specifically, in testing whether $m$ concurrent successful
transmissions are supported or not, the genie scheme will deploy $m$
relays and test whether there exists an $m$-element subset of source
nodes whose transmissions to relays are all successful. In doing so,
the genie scheme will test all $\binom{n}{m}$ ways of choosing $m$
sources for transmission. Moreover, for each combination of $m$
sources, the genie scheme tests $m!$ possible ways of associating
S--R pairs. If the genie scheme can find a combination, among all
$\binom{n}{m}m!$ possible combinations, such that all transmissions
are successful, we claim
that $m$ simultaneous transmissions are achievable. 
By a probabilistic
argument, it is shown in \cite[Th.~2]{CHSP:07} that with probability
approaching $1$, one cannot find a set of $\frac{\log n}{\log 2}+2$
nodes whose simultaneous transmissions to the relays are all
successful. Conversely, with probability approaching $1$, and for
any $\epsilon>0$, there exists a set of $(1-\epsilon )\frac{\log
n}{2\log 2} +2$ nodes whose simultaneous transmissions to relays are
all successful. Since the genie scheme executes an exhaustive search
for maximum number of concurrent successful transmissions, it sets
the upper bound for any decentralized scheme.

Upper bound for the second hop can be derived similarly to the first
hop. There, we seek to find the existence of an $m$-element
destination set such that all $m$ concurrent R--D transmissions are
successful.

The reader is referred to \cite[Th.~2]{CHSP:07} for the complete
proof.
\end{proof}


Theorem~\ref{thm:upper_bound_p1} implies that the throughput scaling
of the two-hop scheme is upper-bounded by the order of $\log n$.
Thus, we are able to answer the first question raised at the outset
of the paper: the lack of full transmitter CSI reduces the
throughput scaling from a power law ($\Theta(n^{1/2})$ \cite{DH:06}
and $\Theta(n^{1/3})$ \cite{MB:07}) to a logarithmic law
$\Theta(\log n)$ if $n$ is interpreted as the total number of nodes
in the network. When implementation is concerned, however, we note
that the genie scheme in the proof requires joint scheduling among
relays and thus is not readily implementable in practice. We are
left with the question as to whether the same throughput scaling is
achievable with practical constraints such as decentralized relay
operation and low rate feedback.

\begin{figure*}[t]
\centerline{ \subfigure[]{\includegraphics{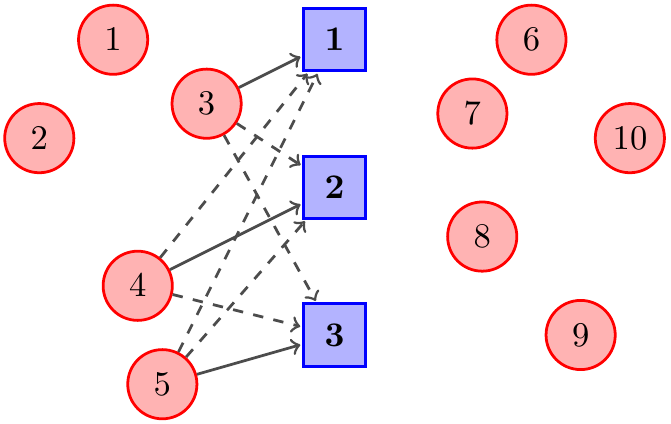}} \hfil
\subfigure[]{\includegraphics{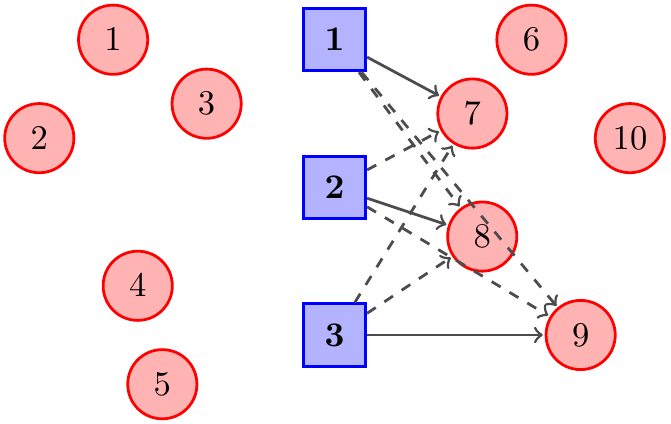}}}
\caption{\label{fig:sys_model}A two-hop network with $n = 5$ S--D
pairs and $m = 3$ relay nodes (denoted by the blue squares). (a) In
the first hop, source nodes $\{3, 4, 5\}$ transmit to the relays.
(b) In the second hop, the relays transmit to the destination nodes.
Solid lines indicate scheduled links, while dashed lines indicate
interferences.}
\end{figure*}

\section{Opportunistic Relaying Scheme}
\label{sec:scheme}

Assuming decentralized relay operation, the relays cannot cancel
mutual interference and have to contend with single-user
encoding-decoding in the two hops. Nevertheless, multiuser diversity
gain, an innate feature of fading channels, is still available and
lends itself to distributed operation. It is shown in the sequel
that, somewhat surprisingly, by exploiting the multiuser diversity,
the throughput scaling of $\log n$ can be achieved with
decentralized
relay operations. 
To enable the scheduling, the scheme requires an index-valued
(integer) CSI from the receivers via low-rate feedback.

\subsection{Scheduling}

As illustrated in Fig.~\ref{fig:sys_model}, the proposed
opportunistic relaying scheme is a two-hop, decode-and-forward-based
communication protocol. In the first hop, a subset of sources is
scheduled for transmission to the relays. Then, the relays decode
and buffer the packets received in the first hop. During the second
hop, the relays forward packets to a subset of destinations (not
necessarily the same set of destinations associated with the sources
set in the first hop). The two phases (hops) are time-interleaved:
Phase $1$ and Phase $2$ take place in even
and odd-indexed time-slots, respectively. 

We assume that the channel gains are dominated by the effects of
small-scale fading. In particular, it is assumed that the wireless
network consists of i.i.d. flat Rayleigh channels. Accordingly, the
channel gain $\gamma_{i,r}$ between the $i$th source node ($1\leq
i\leq n$) and the $r$th relay node ($1\leq r\leq m$), and the
channel gain $\xi_{k,j}$ between the $k$th relay ($1\leq k\leq m$)
and the $j$th destination node ($1\leq j\leq n$), are exponentially
distributed random variables, i.e., $\gamma_{i,r},\
\xi_{k,j}\sim\operatorname{Exp}(1)$.
Quasi-static fading is assumed, in which channels are fixed during
the transmission of each hop, and take on independent values at
different time-slots. We also assume that, in both hops, the
receivers are aware of their backward channel information, and allow
for an integer-value CSI feedback from receivers to transmitters
(relays to sources in Phase $1$, and destinations to relays in Phase
$2$).

\subsubsection{First Hop Scheduling}

The first hop scheduling can be thought of as a natural
generalization of the classic multiuser-diversity-scheme with single
receiver antenna~\cite{KH:95} to multiple, decentralized antennas.
Specifically, all relays operate independently, and each relay
schedules its best source by feeding back the index of the source.
For example, relay $r$ compares the channels
$\gamma_{i^{\prime},r}$, $1\leq i^{\prime}\leq n$, and schedules the
transmission of the strongest source node, say $i=\arg
\max_{i^{\prime }}\gamma_{i',r}$, by feeding back the index $i$. The
overhead of this phase of the protocol is a single integer per relay
node. Suppose the scheduled nodes constitute a set
$\mathcal{K}\subset \{1,\ldots ,n\}$, then since there are $m$
relays, up to $m$ source nodes can be scheduled, i.e.,
$\abs{\mathcal{K}}\leq m$ (a source can be scheduled by multiple
relays). The scheduled source nodes transmit simultaneously at the
same rate of $1$ bit/s/Hz. The communication from source $i$ to
relay $r$ is successful if the corresponding
$\mathsf{SINR}^{\mathrm{P1}}\geq 1$, i.e,
\begin{equation}
\mathsf{SINR}_{i,r}^{\mathrm{P1}}=\frac{\gamma_{i,r}}{1/\rho +
\sum_{\substack{ t\in \mathcal{K} \\ t\neq i}}\gamma_{t,r}}\geq 1,
\label{eq:sinr_ul}
\end{equation}%
where $\rho$ is the average signal-to-noise ratio (SNR) of the S--R
link.

\subsubsection{Second Hop Scheduling}

In the second hop, the transmitters are the $m$ relay nodes, and the
multiuser diversity is achieved by scheduling the destination nodes
via a $\mathsf{SINR}\geq 1$ criterion. In particular, each
destination node $j$, $1\leq j\leq n$, with the assumption of
knowing the forward channel strengths, $\xi _{k,j}$, $1\leq k\leq
m$, computes $m$ SINRs by assuming that relay $k$ is the desired
sender and the other relays are interference:
\begin{equation}
\mathsf{SINR}_{k,j}^{\mathrm{P2}}=\frac{\xi _{k,j}}{1/\rho _{R}+\sum
_{\substack{ 1\leq \ell \leq m \\ \ell \neq k}}\xi _{\ell ,j}},
\label{eq:sinr_dl}
\end{equation}%
where $\rho_{R}$ denotes the average SNR of the R--D link. If the
destination node $j$ captures one good SINR, say,
$\mathsf{SINR}_{k,j}^{\mathrm{P2}}\geq 1$ for some $k$, it instructs
relay $k$ to send data by feeding back the relay index $k$.
Otherwise, the node $j$ does not provide feedback. It follows that
the overhead of the second hop is also at most an index value per
destination node. When scheduled by a feedback message, relay $k$
transmits the data to the destination node at rate $1$ bit/s/Hz. In
case a relay receives multiple feedback messages, it randomly
chooses one destination for transmission.

It is noted that in the steady state operation of the system, the
relays are assumed to buffer the data received from all source
nodes, such that it is available when the opportunity arises to
transmit it to the intended destination nodes over the second hop of
the protocol. This ensures that relays always have packets destined
to the nodes that are scheduled. It should also be noted that, due
to the opportunistic nature of scheduling, the received packets at
the destinations are possibly out of order and therefore each
destination is assumed to have capability of buffering data.

\subsection{Throughput Analysis}

In this subsection, we first derive analytical expressions of the
throughput for each hop assuming the system has a finite number of
nodes. Then, we extract the scaling laws when the system size
increases, i.e., $n\to \infty$, and compare those to the upper
bounds established in the previous section. For the sake of brevity,
we provide here only an outline of the derivation, and the reader is
referred to \cite{CHSP:07} for more details.

\subsubsection{Finite $n$ and $m$}

In the first hop, $m$ relays independently schedule sources. The
number of scheduled sources could be any integer between $1$ and
$m$. Accounting only for the case in which exactly $m$ sources are
scheduled, the average throughput of the first hop can be
lower-bounded as follows,
\begin{equation}
R_{1}\geq m\cdot \Pr [N_{m}]\cdot \Pr [S_{m}],
\label{eq:lower_bound_R_1}
\end{equation}
where $\Pr [N_{m}]$ is the probability of having exactly $m$ sources
scheduled, implying a total transmission rate of $m$ bits/s/Hz. $\Pr
[S_{m}]$ is the probability for a successful S--R transmission.

By symmetry, each source node has a probability of $1/n$ to be the
best node with respect to a relay. Thus, $\Pr [N_{m}]=n(n-1)\cdots
(n-m+1)/n^{m}$. For finite values of $n$ and $m$, exact
characterization of $\Pr [S_{m}]$ is mathematically involved. This
is because the numerator (the maximum of $n$ i.i.d. random
variables) and the denominator (summation of some non-maximum random
variables) are not independent. Fortunately, it is possible to
further lower-bound $\Pr [S_{m}]$ as follows,
\begin{align}
\Pr [S_{m}]& =\Pr [\mathsf{SINR}^{\mathrm{P1}}\geq 1] =\Pr \biggl[\frac{X}{1/\rho +Y}\geq 1\biggr]  \notag \\
           & =\Pr [X\geq s]\cdot \Pr \biggl[\frac{X}{1/\rho +Y}\geq 1\Bigl|X\geq s \biggr]  \notag \\
                & \quad +\Pr [X\leq s]\cdot \Pr \biggl[\frac{X}{1/\rho +Y}\geq 1\Bigl|X\leq s\biggr]  \notag \\
           & \geq \Pr [X\geq s]\cdot \Pr \biggl[\frac{X}{1/\rho +Y}\geq 1\Bigl|X\geq s\biggr]  \notag \\
           & \geq \Pr [X\geq s]\cdot \Pr \biggl[\frac{s}{1/\rho +Y}\geq 1\biggr]  \notag \\
           & =\bigl(1-F_{X}(s)\bigr)F_{Y}(s-1/\rho ),  \label{eq:lower_bound_PrSm}
\end{align}
where $X$ represents the maximum of $n$ i.i.d. exponential random
variables, whose cumulative distribution function (CDF) can be
written explicitly as $F_{X}(x)=(1-e^{-x})^{n}$. The term
$F_Y(\cdot)$ denotes the CDF of the aggregate interference, which is
shown in \cite{CHSP:07} to be well approximated to a chi-square
random variable with $2(m-1)$ degrees-of-freedom with CDF
$F_{Y}(y)=1-e^{-y}\sum_{k=0}^{m-2}\frac{1}{k!}\,y^{k}$, when $n$ is
sufficiently large, e.g., $n>40$. Note that the lower bound
\eqref{eq:lower_bound_PrSm} suggests a suboptimal scheduling scheme
according to which, each relay schedules the transmission of the
``strongest'' source {\em only} if the source's power gain exceeds a
prescribed threshold $s$. The probability of such event is given by
$1-F_{X}(s)$, and $F_{Y}(s-1/\rho)$ is a lower bound on the
probability of a successful communication with the relay at a rate
of $1$ bit/s/Hz.

Substituting the lower bound of $\Pr[S_m]$ into
\eqref{eq:lower_bound_R_1}, we get a lower bound on $R_1$, as
expressed in the following lemma.

\begin{lemma}
\label{lem:fix_m_large_n_ul} For any $\rho$, $n> m$ and $s>0$, the
achievable throughput of the opportunistic relay scheme of the first
hop is lower-bounded by
\begin{equation}
R_{1}\geq m\tfrac{n(n-1)\cdots (n-m+1)}{n^{m}}\bigl(1-(1-e^{-s})^{n}\bigr) %
F_{Y}\Bigl(s-\tfrac{1}{\rho }\Bigr).  \label{eq:llower_bound_R_1}
\end{equation}
\end{lemma}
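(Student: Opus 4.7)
The plan is to stitch together the two intermediate estimates already developed immediately before the lemma statement. First, I would justify the product decomposition $R_1 \geq m \cdot \Pr[N_m] \cdot \Pr[S_m]$ from \eqref{eq:lower_bound_R_1}: restrict attention to the joint event that exactly $m$ distinct sources are scheduled (so the attempted sum rate is $m$ bits/s/Hz, at $1$ bit/s/Hz per link) and that each of the $m$ concurrent S--R transmissions meets $\mathsf{SINR}\geq 1$; throughput contributions from configurations with fewer than $m$ scheduled sources are nonnegative and are simply discarded. By symmetry the $m$ tagged SINRs are identically distributed, so the joint success probability factors as $\Pr[N_m]\,\Pr[S_m]$ at the level of a lower bound.

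Second, I would evaluate $\Pr[N_m]$ by a birthday-style count. The index fed back by relay $r$ is the $\arg\max$ of $n$ i.i.d.\ $\operatorname{Exp}(1)$ gains and is therefore uniformly distributed on $\{1,\ldots,n\}$ by exchangeability; the indices returned by distinct relays are independent because the rows of the S--R channel matrix are independent. The probability that $m$ such uniform draws are pairwise distinct is the classical expression $n(n-1)\cdots(n-m+1)/n^m$.

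Third, I would lower-bound $\Pr[S_m]=\Pr[X/(1/\rho+Y)\geq 1]$ using the threshold argument already displayed in \eqref{eq:lower_bound_PrSm}: introduce a free parameter $s>0$, split on $\{X\geq s\}$ versus $\{X<s\}$, drop the nonnegative $\{X<s\}$ branch, use monotonicity of $x\mapsto x/(1/\rho+Y)$ to replace $X$ by $s$ inside the event on the surviving branch, and then invoke the (approximate) independence of $Y$ from $X$ to peel off the conditioning. This produces $\Pr[S_m]\geq\bigl(1-F_X(s)\bigr)F_Y(s-1/\rho)$ with $F_X(s)=(1-e^{-s})^n$ the CDF of the maximum of $n$ i.i.d.\ $\operatorname{Exp}(1)$ variates, and $F_Y$ as specified in the text. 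Substituting both estimates into \eqref{eq:lower_bound_R_1} yields the claimed bound.

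The main subtle point is the very last reduction, namely replacing $\Pr[s/(1/\rho+Y)\geq 1\mid X\geq s]$ by the unconditional $F_Y(s-1/\rho)$. Strictly speaking the interferer gains at the tagged relay are constrained to lie below $X=\max_j\gamma_{j,r}$, so $X$ and $Y$ are not independent, and conditioning on $X$ being large loosens that constraint. The standard route, flagged in the text, is to lean on the chi-square characterization of $Y$ with $2(m-1)$ degrees of freedom derived in \cite{CHSP:07}: for $n$ large the maximum $X$ concentrates on a scale of order $\log n$, so the truncation $\gamma_{t,r}\leq X$ is essentially non-binding on each of the $m-1$ scheduled interferers, and $Y$ decouples from $X$. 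Once this decoupling is granted, the remaining work is mechanical substitution.
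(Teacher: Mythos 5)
Your proposal follows the paper's own derivation step for step: the decomposition \eqref{eq:lower_bound_R_1}, the birthday-style count for $\Pr[N_m]$, and the threshold/splitting bound \eqref{eq:lower_bound_PrSm}, including the honest flag that dropping the conditioning on $\{X\geq s\}$ rests on the approximate decoupling of $Y$ from $X$ via the chi-square characterization from \cite{CHSP:07}. One phrasing nit: the factor $m\,\Pr[N_m]\,\Pr[S_m]$ should be obtained by linearity of expectation over the $m$ exchangeable links (i.e., the expected number of successful links restricted to the event $N_m$), not as a ``factorization of the joint success probability'' that all $m$ links succeed --- the latter event has probability at most $\Pr[S_m]$, so crediting rate $m$ only on it would yield a weaker bound than \eqref{eq:llower_bound_R_1}.
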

\smallskip

Turning to the second hop and recalling that its scheduling is based
on SINR instead of SNR, all transmissions are successful by
definition. Thus, the throughput of the second hop depends on how
many relays receive feedback and therefore transmit data packets to
the destinations. Furthermore, a relay is scheduled when at least
one destination measures its channel with SINR greater than or equal
to one. Therefore, the average throughput can be characterized in a
closed form expression, as formulated in the lemma.

\begin{lemma}
\label{lem:fix_m_large_n_dl} For any $\rho_R$, $n$ and $m$, the
achievable throughput of the opportunistic relay scheme in the
second hop is given by
\begin{equation}
R_{2}=m\left( 1-\left( 1-\frac{e^{-1/\rho _{R}}}{2^{m-1}}\right) ^{n}\right).\label{eq:R2}
\end{equation}
\end{lemma}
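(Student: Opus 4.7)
The plan is to express the average second-hop throughput as the expected number of relays that are scheduled (each of which contributes $1$ bit/s/Hz), and then exploit symmetry and the independence of the channels across destinations to reduce the computation to a single elementary probability. Observe first that, by construction of the scheduling rule, every destination that feeds back does so because its SINR with respect to the chosen relay is already $\geq 1$ under the pessimistic assumption that \emph{all} $m$ relays transmit; since in reality only a subset of relays may be active, the interference at the selected destination is no larger than assumed, so each scheduled transmission is successful. Hence
\begin{equation}
R_2 \;=\; \sum_{k=1}^{m} \Pr[\text{relay } k \text{ is scheduled}] \;=\; m\cdot\Pr[\text{relay } 1 \text{ is scheduled}],\notag
\end{equation}
the last equality holding by symmetry.

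Next, I would write that relay $1$ is scheduled iff \emph{some} destination has $\mathsf{SINR}^{\mathrm{P2}}_{1,j}\geq 1$. Because the channel vectors $(\xi_{1,j},\ldots,\xi_{m,j})$ are mutually independent across $j$, the events $\{\mathsf{SINR}^{\mathrm{P2}}_{1,j}\geq 1\}$ are i.i.d.\ over $j$. Therefore
\begin{equation}
\Pr[\text{relay }1\text{ is scheduled}] \;=\; 1-(1-q)^n, \qquad q:=\Pr[\mathsf{SINR}^{\mathrm{P2}}_{1,1}\geq 1].\notag
\end{equation}

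The remaining step is to compute $q$ in closed form. Writing $X=\xi_{1,1}$ and $Y=\sum_{\ell=2}^{m}\xi_{\ell,1}$, which are independent, conditioning on $Y$ and using the memoryless form $\Pr[X\geq t]=e^{-t}$ for the unit-mean exponential gives $q=e^{-1/\rho_R}\,\mathbb{E}[e^{-Y}]$. Since $\xi_{\ell,1}\sim\operatorname{Exp}(1)$ are i.i.d., the Laplace transform factorizes as $\mathbb{E}[e^{-Y}]=\bigl(\mathbb{E}[e^{-\xi_{2,1}}]\bigr)^{m-1}=(1/2)^{m-1}$. Plugging back yields $q=e^{-1/\rho_R}/2^{m-1}$ and therefore $R_2 = m\bigl(1-(1-e^{-1/\rho_R}/2^{m-1})^n\bigr)$, which is \eqref{eq:R2}.

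I do not expect any real obstacle here; the only point that warrants care is the accounting at relays that receive multiple feedback messages (the relay picks one destination at random) and at destinations whose feedback is not acted upon (other destinations may have been chosen): neither situation changes the count of \emph{active} relays, and each active relay's chosen destination still decodes successfully by the dominance argument in the first paragraph. Consequently the simple linearity-of-expectation plus per-destination i.i.d. decomposition above is exact, not merely a bound.
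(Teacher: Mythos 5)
Your proof is correct and follows essentially the same route the paper outlines: the second-hop throughput equals the expected number of relays receiving feedback, a relay is scheduled iff at least one of the $n$ independent destinations sees $\mathsf{SINR}\geq 1$ for it, and the per-destination probability $q=e^{-1/\rho_R}/2^{m-1}$ follows from the Laplace transform of the sum of $m-1$ i.i.d.\ unit-mean exponentials. Your added care about multiple feedbacks at a relay and about the interference only decreasing when fewer relays are active is a welcome tightening of the paper's terser ``all transmissions are successful by definition.''
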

\smallskip

\subsubsection{Large $n$ and Finite $m$}
\label{sec:oppor_fixed_m}

With the closed-form expressions of \eqref{eq:llower_bound_R_1} and
\eqref{eq:R2} at hand, we proceed to the regime of large $n$, but
fixed $m$. The discussion of this regime is of practical importance
in that as communication devices become pervasive, the number of
infrastructure nodes (here the relays) is not likely to keep pace.

As mentioned above, the parameter $s$ in \eqref{eq:llower_bound_R_1}
can be interpreted as a scheduling threshold. Note that in a system
with $n$ sources and Rayleigh fading channels, the maximum channel
gain seen by each relay is of the order of $\log n$ \cite{VTL:02},
we empirically set $s=\log n-\log\log n$ in
\eqref{eq:llower_bound_R_1}. Then, it is easy to show that $R_1\to
m$ with $n\to \infty$. Similarly, letting $n\to \infty$ in
\eqref{eq:R2}, results in $R_2\to m$. Now, since the average
throughput of the two-hop scheme is $R=\frac{1}{2}\min\{R_1,R_2\}$,
we conclude that $R\to \frac{m}{2}$ for $n\to \infty$. The results
for large $n$ and finite $m$ are summarized in the following
theorem.

\begin{theorem}
\label{thm:thput_fix_m_large_n} For fixed $m$, the two-hop
opportunistic relaying scheme achieves a system throughput of $m/2$
bits/s/Hz as $n\rightarrow \infty $.
\end{theorem}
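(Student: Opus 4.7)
The plan is to show separately that $R_1 \to m$ and $R_2 \to m$ as $n \to \infty$ with $m$ fixed, and then invoke the throughput formula $R = \tfrac12 \min\{R_1, R_2\}$.

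For the first hop, my starting point is the lower bound of Lemma~\ref{lem:fix_m_large_n_ul}. I would set the threshold $s = \log n - \log\log n$, as suggested in the text, and argue that each of the three factors on the right-hand side of \eqref{eq:llower_bound_R_1} converges to $1$. The combinatorial factor $n(n-1)\cdots(n-m+1)/n^m$ tends to $1$ trivially because $m$ is fixed. For the second factor, $e^{-s}=(\log n)/n$, so $(1-e^{-s})^n = (1-(\log n)/n)^n \sim e^{-\log n}=1/n \to 0$, and thus $1-(1-e^{-s})^n \to 1$. For the third factor, $s - 1/\rho \to \infty$ and $F_Y$ is the CDF of a fixed (independent of $n$) chi-square distribution with $2(m-1)$ degrees of freedom, so $F_Y(s - 1/\rho) \to 1$. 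Combining these gives $\liminf_{n\to\infty} R_1 \ge m$. The matching upper bound $R_1 \le m$ is immediate because at most $m$ sources are scheduled, each transmitting at rate $1$ bit/s/Hz; therefore $R_1 \to m$.

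For the second hop, I would apply Lemma~\ref{lem:fix_m_large_n_dl} directly. Since $m$ is fixed, the quantity $a := e^{-1/\rho_R}/2^{m-1}$ is a constant in $(0,1)$, and $(1-a)^n \to 0$ as $n\to\infty$. Plugging into \eqref{eq:R2} yields $R_2 \to m$.

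Finally, using $R = \tfrac12 \min\{R_1,R_2\}$ from the protocol's interleaved time-slot structure, both $R_1$ and $R_2$ converge to $m$, so $R \to m/2$ as $n \to \infty$.

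The only delicate step is the asymptotic estimate $(1-(\log n)/n)^n \to 0$ together with the claim that the chi-square CDF evaluated at a growing argument tends to $1$; both are routine but are the substance of the argument that the lower bound \eqref{eq:llower_bound_R_1} is asymptotically tight. No further results beyond Lemmas~\ref{lem:fix_m_large_n_ul} and~\ref{lem:fix_m_large_n_dl} are needed.
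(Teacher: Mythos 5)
Your proposal is correct and follows essentially the same route as the paper: set $s=\log n-\log\log n$ in the lower bound of Lemma~\ref{lem:fix_m_large_n_ul}, show each factor tends to $1$ so that $R_1\to m$, let $n\to\infty$ in \eqref{eq:R2} to get $R_2\to m$, and conclude via $R=\tfrac12\min\{R_1,R_2\}$. The paper states these limits as "easy to show"; you have simply supplied the routine asymptotics (including the useful observation that $R_1\le m$ trivially, which makes the lower bound tight).
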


In the opportunistic scheme, we make practical assumptions of
decentralized relays and partial CSI. Thus, it is instructive to
compare the throughput of the opportunistic scheme with that of an
unconstrained scheme. In fact, it is straightforward to show that,
the information-theoretical sum-rate for {\em any} two-hop scheme is
upper-bounded by $\frac{m}{2}\log \log n$ \cite[Lemma~3]{CHSP:07},
even if relay cooperation and full CSI at the relays are assumed.
This upper bound (with cooperation and full CSI) can be interpreted
as a multiple antenna system, which is well-known to be able to
support $m$ parallel channels. Moreover, {\em each} of the parallel
channels enjoys multiuser diversity gain of $\log n$ that translates
into a throughput of $\log\log n$. In contrast, the opportunistic
scheme, with simplified network operation (decentralized operation
and partial CSI assumption), has no such freedom to support $m$
parallel channels with rate $\log\log n$. However, it succeeds in
preserving the pre-log factor of the upper bound. Intuitively, the
inherent multiuser diversity gain, which is of the order of $\log
n$, is applied to compensate for the mutual interference stemming
from concurrent transmissions and to make the scheduled links
reliable.

\subsubsection{Large $n$ and $m$}
\label{sec:oppor_large_m}

Theorem~\ref{thm:thput_fix_m_large_n} shows that when the number of
S-D pairs $n$ is large and the number of relay nodes $m$ is fixed,
the average system throughput scales linearly with $m$. This implies
that one can increase the number of relays to increase system
throughput. However, both \eqref{eq:llower_bound_R_1} and
\eqref{eq:R2} present a tradeoff of throughput in $m$: by making $m$
large, one increases the number of transmissions, but as a
consequence the reliability of each link degrades. Therefore, there
exists an optimal value of $m$ such that the throughput scaling is
maximized. Finding the optimal order of $m$ is equivalent to finding
the throughput scaling of the proposed opportunistic relaying
scheme. Specifically, we are interested in finding whether the
proposed scheme can achieve the throughput scaling upper bound of
$\Theta(\log n)$ established in Section~\ref{sec:upperBound}.

To prove that the average throughput of the first hop indeed scales
as $\Theta(\log n)$, it is sufficient to show that the lower bound
\eqref{eq:llower_bound_R_1} achieves scaling of order $\log n$. To
this end, consider the case of $m=\log n$ and $s=\log n-\log \log
n$. With $n\rightarrow \infty$, it follows that $\tfrac{n(n-1)\cdots
(n-m+1)}{n^{m}} \rightarrow 1$ and
$\bigl(1-(1-e^{-s})^{n}\bigr)\rightarrow 1$. Furthermore, for
$m=\log n$, the interference term $Y$ can be approximated by a
Gaussian random variable with mean and variance both equal to $\log
n$. Due to the symmetry of the Gaussian distribution, we have
$F_{Y}(\log n-\log \log n-1/\rho )\approx F_{Y}(\log
n)=\frac{1}{2}$. This result implies that if we deploy $m=\log n$
relays, with high probability, $\log n$ sources will be scheduled
for transmission, and half of them will be, on average, successful.
This yields an average throughput of $\frac{1}{2}\log n$ for the
first hop.

Examining the asymptotic behavior of \eqref{eq:R2} with respect to
$m$ and $n$, it is straightforward to show that the maximum
throughput scaling of the second hop also scales as $\Theta(\log n)$
\cite[Th.~3]{CHSP:07}:

\begin{theorem}
\label{thm:Ptwo_large_m} For the second hop of the two-hop
opportunistic relaying scheme, if the number of relays $m=\frac{\log
n-\log \log n-1/\rho_R}{\log 2}+1$, then $R_{2}=\Theta \left(
m\right) =\Theta \left( \log n\right) $. Conversely, if
$m=\frac{\log n+\log \log n-1/\rho_R}{\log 2}+1$, then $R_{2}=o(m)$.
\end{theorem}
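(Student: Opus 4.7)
The plan is to start from the closed-form expression of Lemma~\ref{lem:fix_m_large_n_dl}, namely
\[
R_{2}=m\left(1-\left(1-\frac{e^{-1/\rho_{R}}}{2^{m-1}}\right)^{n}\right),
\]
and analyze the asymptotics of the factor in parentheses as $n\to\infty$ under the two prescribed scalings of $m$. Let me write $p=e^{-1/\rho_{R}}/2^{m-1}$. Since in both regimes $m\to\infty$, we have $p\to 0$, and the standard estimate $(1-p)^{n}=\exp\!\bigl(n\log(1-p)\bigr)=\exp\!\bigl(-np+O(np^{2})\bigr)$ reduces the analysis to tracking the single quantity $np=n\,e^{-1/\rho_{R}}\cdot 2^{-(m-1)}$. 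The main obstacle is not conceptual but bookkeeping: showing that the $O(np^{2})$ correction is negligible in both regimes and tying the resulting estimate cleanly back to the multiplicative factor $m$.

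For the direct part, I plug in $m=\frac{\log n-\log\log n-1/\rho_{R}}{\log 2}+1$. Then $2^{m-1}=\exp(\log n-\log\log n-1/\rho_{R})=n\,e^{-1/\rho_{R}}/\log n$, so $np=\log n$, and $p=(\log n)/n=o(1)$ which makes $np^{2}=(\log n)^{2}/n=o(1)$. Hence
\[
\left(1-p\right)^{n}=e^{-\log n+o(1)}=\tfrac{1}{n}(1+o(1)),
\]
so $1-(1-p)^{n}\to 1$ and $R_{2}=m(1+o(1))=\Theta(m)=\Theta(\log n)$, as required.

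For the converse, I plug in $m=\frac{\log n+\log\log n-1/\rho_{R}}{\log 2}+1$, which gives $2^{m-1}=n(\log n)e^{-1/\rho_{R}}$ and therefore $np=1/\log n\to 0$. Expanding,
\[
1-(1-p)^{n}=1-e^{-np+O(np^{2})}=np\,(1+o(1))=\frac{1}{\log n}(1+o(1)).
\]
Multiplying by $m\sim(\log n)/\log 2$, we obtain $R_{2}\to 1/\log 2$, a constant. Since $m\to\infty$, this yields $R_{2}=o(m)$, which completes the converse.

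In summary, the whole proof is an asymptotic evaluation of $m(1-(1-p)^{n})$ where the threshold behavior is governed by whether $np$ grows, stays bounded, or vanishes; the two choices of $m$ in the theorem correspond precisely to $np=\log n$ (throughput recovers $m$) and $np=1/\log n$ (throughput collapses to a constant), and the phase transition occurs at $np=\Theta(1)$, which corresponds to $m\approx(\log n-1/\rho_{R})/\log 2+1$.
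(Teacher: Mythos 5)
Your proof is correct and follows exactly the route the paper intends: a direct asymptotic evaluation of the closed form $R_{2}=m\bigl(1-(1-e^{-1/\rho_{R}}/2^{m-1})^{n}\bigr)$ from Lemma~\ref{lem:fix_m_large_n_dl}, with the two choices of $m$ yielding $np=\log n$ and $np=1/\log n$ respectively. The bookkeeping (including the $O(np^{2})$ error terms) checks out, and your closing remark correctly identifies the phase transition at $np=\Theta(1)$.
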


By considering two hops as a whole, we get the following:

\begin{theorem}
\label{thm:max_thput_our_scheme} Under the setup of
Section~\ref{sec:scheme}, the proposed two-hop opportunistic
relaying scheme yields a maximum achievable throughput of $\Theta
\left( \log n\right) $.
\end{theorem}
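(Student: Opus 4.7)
The plan is to combine the matching upper and lower bounds already assembled in the paper. On the converse side, Theorem~\ref{thm:upper_bound_p1} bounds the throughput of each hop by $O(\log n)$. Since the end-to-end throughput of a two-hop decode-and-forward protocol is $R=\tfrac{1}{2}\min\{R_{1},R_{2}\}$ (the factor of one half reflecting the time-interleaving of the two phases), it follows immediately that $R=O(\log n)$, regardless of the number of relays deployed. So the converse requires no further work beyond invoking Theorem~\ref{thm:upper_bound_p1} applied to both hops.

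For the achievability direction, I would pick a single value of $m$ that makes both hops scale as $\log n$ simultaneously. A natural choice is $m = c\log n$ for a suitable constant $c$; in particular, Theorem~\ref{thm:Ptwo_large_m} already exhibits a threshold of the form $m=\tfrac{\log n - \log\log n - 1/\rho_{R}}{\log 2}+1$ that yields $R_{2}=\Theta(\log n)$ for the second hop. For the first hop, I would substitute $m=\Theta(\log n)$ together with the threshold $s=\log n-\log\log n$ into Lemma~\ref{lem:fix_m_large_n_ul} and verify, term by term, that (i) the combinatorial prefactor $n(n-1)\cdots(n-m+1)/n^{m}\to 1$, (ii) $1-(1-e^{-s})^{n}\to 1$, and (iii) $F_{Y}(s-1/\rho)\to \tfrac12$ via the Gaussian approximation of $Y$ (a chi-square with $2(m-1)$ degrees of freedom) mentioned in Section~\ref{sec:oppor_large_m}. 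These three facts together give $R_{1}\geq (\tfrac{1}{2}+o(1))\log n$, i.e.\ $R_{1}=\Omega(\log n)$.

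Having $R_{1}=\Omega(\log n)$ and $R_{2}=\Omega(\log n)$ under the same choice of $m$, and combining with the per-hop converse, gives $R=\tfrac{1}{2}\min\{R_{1},R_{2}\}=\Theta(\log n)$, which is the claim.

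The one subtle point, which I expect to be the main obstacle, is making sure a \emph{single} choice of $m$ (or at worst, a choice from a matching order) works for both hops: the first-hop analysis is driven by the Gaussian approximation of the interference and the choice $s=\log n-\log\log n$, while the second-hop analysis rests on the closed form \eqref{eq:R2} and its own optimal scaling $m\sim\tfrac{\log n}{\log 2}$. One must verify that the $\Theta(\log n)$ value of $m$ that is asymptotically optimal for the second hop is also within the regime where the first-hop Gaussian approximation and the resulting $R_{1}=\Omega(\log n)$ estimate remain valid; since both optimal orders coincide up to constants, this is a matter of checking constants and rounding, rather than a structural difficulty. Once this is verified, the theorem follows by sandwiching $R$ between matching $\Omega(\log n)$ and $O(\log n)$ bounds.
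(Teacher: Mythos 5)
Your proposal is correct and follows essentially the same route as the paper: the converse is Theorem~\ref{thm:upper_bound_p1} applied per hop, and achievability combines the first-hop analysis (Lemma~\ref{lem:fix_m_large_n_ul} with $m=\log n$, $s=\log n-\log\log n$, and the Gaussian approximation giving $R_1\geq(\tfrac12+o(1))\log n$) with the second-hop scaling of Theorem~\ref{thm:Ptwo_large_m}, both at $m=\Theta(\log n)$. The single-$m$ compatibility issue you flag is real but benign, exactly as you suspect: taking $m=\log n$ (below the second hop's critical threshold $\sim\log n/\log 2$) one checks directly from \eqref{eq:R2} that $(1-e^{-1/\rho_R}2^{1-m})^{n}\to 0$, so $R_2\sim m$ and both hops are $\Theta(\log n)$ simultaneously.
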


Interestingly, we see that the proposed opportunistic relaying
scheme, which assumes decentralized relay operations and practical
CSI assumption, incurs no loss in achieving the optimal throughput
scaling upper bound. This gives an affirmative answer to the second
question posed at the outset of the paper.

The achievability of $\Theta(\log n)$ is also substantiated by Monte
Carlo simulations. In the simulations, the average SNR of each hop
is assumed to be $10$ dB and the simulation curve was obtained by
averaging throughput over $2000$ channel realizations. In
Fig.~\ref{fig:sim}, the average system throughput of the two-hop
opportunistic relaying scheme is shown as a function of the number
of S--D pairs $n$. (Note that the throughput depends on both $n$ and
$m$. For each value of $n$, optimal throughput (by maximizing over
$m$) is plotted.) We observe that the throughput exhibits the $\log
n$ trend, as predicted by Theorem~\ref{thm:max_thput_our_scheme}. It
is also found in simulation that the system throughput is always
limited by Phase $1$, i.e., $R=\frac{1}{2}\min \{R_{1},R_{2}\}=
\frac{1}{2}R_{1}$. Thus, we also plot $1/2$ of the upper bound and
lower bound of $R_{1}$ for reference. Recall that the average
throughput of $R_1$ is upper-bounded by the genie bound $\frac{\log
n}{2\log 2}+2$ (cf.~Theorem~\ref{thm:upper_bound_p1}) and
lower-bounded by $\frac{1}{2}\log n$.

\begin{figure}[htb!]
\centering
\includegraphics[width=8.5cm]{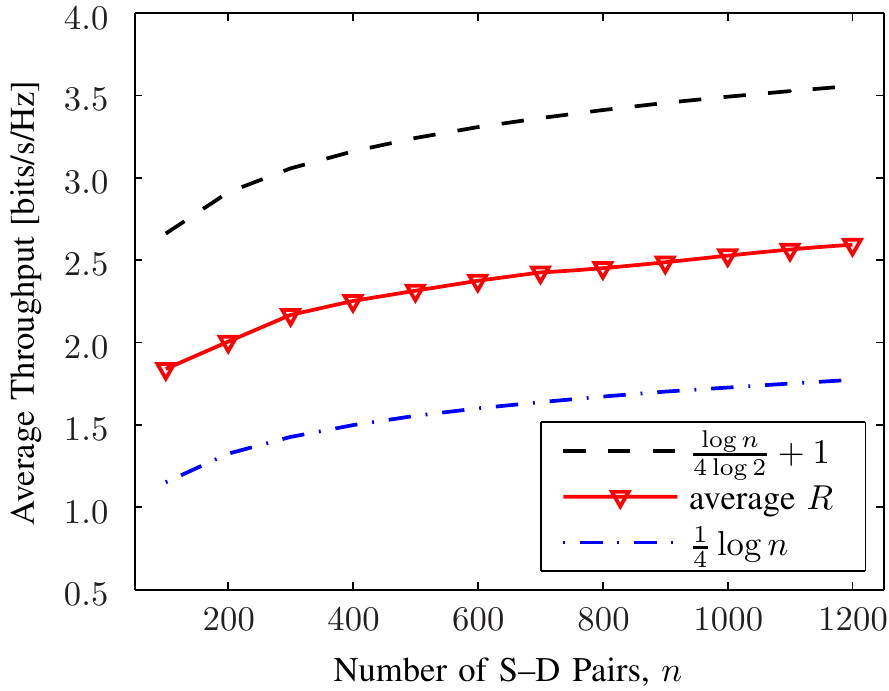}
\caption{Simulated average system throughput $R$ and analytical
upper and lower bounds. \label{fig:sim}}
\end{figure}

\subsection{Feedback Overhead}

According to the opportunistic relaying scheme, a feedback mechanism
is needed to schedule the {\em good} nodes enjoying multiuser
diversity. By direct computation, it can be shown that, in the
limiting operation regime of $m=\Theta(\log n)$, the feedback
overhead {\em per fading block} is $\Theta((\log n)^2)$ for the
first hop and $\Theta(\log n \log\log n)$ for the second hop. The
overhead of feedback is negligible when the block length is large.

\section{Conclusion}
\label{sec:conclusion}

In this paper, we have considered a network having $n$ S--D pairs
and $m$ relay nodes, operating in the presence of Rayleigh fading.
The emphasis is on characterizing the throughput scaling under the
assumption of practical CSI requirement. It has been shown that the
lack of full CSI at the relays reduces the throughput scaling
drastically from a power law (e.g., $\Theta(n^{1/2})$ \cite{DH:06})
to a logarithmic law $\Theta(\log n)$ in the total number of nodes
$n$ in the network.

Furthermore, an opportunistic relaying scheme that operates in a
completely decentralized fashion and assumes only CSI at receivers
and partial CSI at the transmitters, has been proposed and shown to
achieve a throughput scaling of $\Theta(\log n)$. Thus, the lack of
joint scheduling among relays causes no loss of optimality as far as
throughput scaling is concerned.

An interesting subject for further research is the performance
analysis of opportunistic relaying schemes employed in more general
system models. In particular, models that include both small-scale
fading and geographical attenuation (e.g. the model presented in
\cite{GH:06}) are of interest.

\section*{Acknowledgment}

This research was supported in part by the U.S. National Science
Foundation under Grants CNS-06-26611, CNS-06-25637 and
ANI-03-38807, and by a Marie Curie Outgoing International
Fellowship within the 6th European Community Framework Programme.

\bibliographystyle{IEEEtran}
\bibliography{ISIT08_final}

\end{document}